\documentclass{IEEEtran4PSCC}
\usepackage{mathtools}
\usepackage{physics}
\usepackage{subfigure}
\ifCLASSINFOpdf
   \usepackage[pdftex]{graphicx}
  % declare the path(s) where your graphic files are
  % \graphicspath{{../pdf/}{../jpeg/}}
  % and their extensions so you won't have to specify these with
  % every instance of \includegraphics
  % \DeclareGraphicsExtensions{.pdf,.jpeg,.png}
\else
  % or other class option (dvipsone, dvipdf, if not using dvips). graphicx
  % will default to the driver specified in the system graphics.cfg if no
  % driver is specified.
   \usepackage[dvips]{graphicx}
  % declare the path(s) where your graphic files are
  % \graphicspath{{../eps/}}
  % and their extensions so you won't have to specify these with
  % every instance of \includegraphics
  % \DeclareGraphicsExtensions{.eps}
\fi
% graphicx was written by David Carlisle and Sebastian Rahtz. It is
% required if you want graphics, photos, etc. graphicx.sty is already
% installed on most LaTeX systems. The latest version and documentation
% can be obtained at: 
% http://www.ctan.org/tex-archive/macros/Latex/required/graphics/
% Another good source of documentation is 'Using Imported Graphics in
% LaTeX2e' by Keith Reckdahl which can be found at:
% http://www.ctan.org/tex-archive/info/epsLatex/
%
% Latex, and pdfLatex in dvi mode, support graphics in encapsulated
% postscript (.eps) format. pdfLatex in pdf mode supports graphics
% in .pdf, .jpeg, .png and .mps (metapost) formats. Users should ensure
% that all non-photo figures use a vector format (.eps, .pdf, .mps) and
% not a bitmapped formats (.jpeg, .png). IEEE frowns on bitmapped formats
% which can result in 'jaggedy'/blurry rendering of lines and letters as
% well as large increases in file sizes.
%
% You can find documentation about the pdfTeX application at:
% http://www.tug.org/applications/pdftex

\usepackage{amsmath,amsfonts}
\usepackage{amsthm,amssymb}
\usepackage{color}

% *** MATH PACKAGES ***
%

% Set footer
\makeatletter
\let\old@ps@headings\ps@headings
\let\old@ps@IEEEtitlepagestyle\ps@IEEEtitlepagestyle
\def\pSCCfooter#1{%
    \def\ps@headings{%
        \old@ps@headings%
        \def\@oddfoot{\strut\hfill#1\hfill\strut}%
        \def\@evenfoot{\strut\hfill#1\hfill\strut}%
    }%
    \def\ps@IEEEtitlepagestyle{%
        \old@ps@IEEEtitlepagestyle%
        \def\@oddfoot{\strut\hfill#1\hfill\strut}%
        \def\@evenfoot{\strut\hfill#1\hfill\strut}%
    }%
    \ps@headings%
}
\makeatother

\pSCCfooter{%
        \parbox{\textwidth}{\hrulefill \\ \small{23rd Power Systems Computation Conference} \hfill \begin{minipage}{0.2\textwidth}\centering \vspace*{4pt} \includegraphics[scale=0.06]{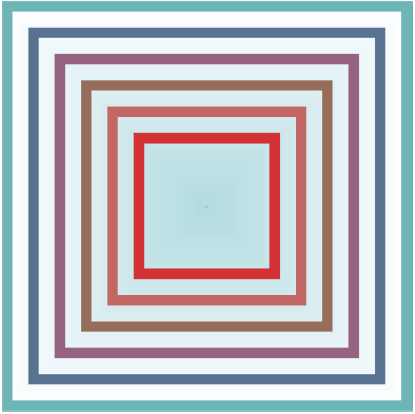}\\\small{PSCC 2024} \end{minipage} \hfill \small{Paris-Saclay, France --- June 4 -- 7, 2024}}%
}

\graphicspath{{Figs/}}

\graphicspath{{Figures/}}

\newtheorem{theorem}{Theorem}[section]
\newtheorem{proposition}[theorem]{Proposition}

\theoremstyle{definition}

\newcommand{\longthmtitle}[1]{\mbox{}{\textit{(#1):}}}

\newcommand{\complex}{\ensuremath{\mathbb{C}}}

\newcommand{\Cc}{\mathcal{C}}

\newcommand{\Ec}{\mathcal{E}}

\newcommand{\Lc}{\mathcal{L}}

\newcommand{\Nc}{\mathcal{N}}

\newcommand{\Pc}{\mathcal{P}}
\newcommand{\Qc}{\mathcal{Q}}

\newcommand\abf{\mathbf{a}}

\newcommand\dbf{\mathbf{d}}

\newcommand\pbf{\mathbf{p}}
\newcommand\qbf{\mathbf{q}}

\newcommand\vbf{\mathbf{v}}

\newcommand\ybf{\mathbf{y}}
\newcommand\zbf{\mathbf{z}}

\newcommand\Ibf{\mathbf{I}}

\newcommand\Pbf{\mathbf{P}}
\newcommand\Qbf{\mathbf{Q}}
\newcommand\Rbf{\mathbf{R}}
\newcommand\Sbf{\mathbf{S}}

\newcommand\Xbf{\mathbf{X}}
\newcommand\Ybf{\mathbf{Y}}
\newcommand\Zbf{\mathbf{Z}}

\newcommand\gammab{\boldsymbol{\gamma}}
\newcommand\xib{\boldsymbol{\xi}}

\newcommand{\ones}{\mathbf{1}}
\newcommand{\zeros}{\mathbf{0}}

\begin{document}
%
% paper title
% Titles are generally capitalized except for words such as a, an, and, as,
% at, but, by, for, in, nor, of, on, or, the, to and up, which are usually
% not capitalized unless they are the first or last word of the title.
% Linebreaks \\ can be used within to get better formatting as desired.
% Do not put math or special symbols in the title.
\title{Unsupervised Learning for Equitable DER Control}

%% To specify the authors when (number of affiliations <= 2)
\author{
\IEEEauthorblockN{Zhenyi Yuan$^1$, Guido Cavraro$^2$, Ahmed S. Zamzam$^2$, and Jorge Cort\'es$^1$}

\IEEEauthorblockA{
$^1$Department of Mechanical and Aerospace Engineering, University of California, San Diego, La Jolla, CA 92093, USA\\
$^2$Power Systems Engineering Center, National Renewable Energy Laboratory, Golden, CO 80401, USA\\
{\tt \{z7yuan,cortes\}@ucsd.edu; \{guido.cavraro,ahmed.zamzam\}@nrel.gov}}}
\maketitle
%
% \marginJC{For title, how about  "Unsupervised Learning for Equitable DER Control"?}
% \marginZY{I think so.}
%
% \begin{abstract}
% The Latex template and basic guidelines for the preparation of a technical paper for the PSCC 2024 conference are presented. The abstract is limited to 150 words and cannot contain equations, figures, tables, or references. It should concisely state what was done, how it was done, principal results, and their significance.
% \end{abstract}
%

\begin{abstract}
In the context of managing distributed energy resources (DERs) within distribution networks (DNs), this work focuses on the task of developing local controllers. We propose an unsupervised learning framework to train functions that can closely approximate optimal power flow (OPF) solutions. The primary aim is to establish specific conditions under which these learned functions can collectively guide the network towards desired configurations asymptotically, leveraging an incremental control approach. The flexibility of the proposed methodology allows to integrate fairness-driven components into the cost function associated with the OPF problem. This addition seeks to mitigate power curtailment disparities among DERs, thereby promoting equitable power injections across the network.
To demonstrate the effectiveness of the proposed approach, power flow simulations are conducted using the IEEE 37-bus feeder. The findings not only showcase the guaranteed  system stability but also underscore its improved overall performance.
\end{abstract}

\begin{IEEEkeywords}
DER Control; Asymptotic Stability; Equitable Control; Unsupervised Learning.
\end{IEEEkeywords}

% Use this to place sponsorships
\thanksto{\noindent This work was authored in part by the National Renewable Energy Laboratory, operated by Alliance for Sustainable Energy, LLC, for the U.S. Department of Energy (DOE) under Contract No. DE-AC36-08GO28308. This work was supported by the Laboratory Directed Research and Development (LDRD) Program at NREL, and was also partially supported by NSF Award ECCS-1947050. The views expressed in the article do not necessarily represent the views of the DOE or the U.S. Government. The U.S. Government retains and the publisher, by accepting the article for publication, acknowledges that the U.S. Government retains a nonexclusive, paid-up, irrevocable, worldwide license to publish or reproduce the published form of this work, or allow others to do so, for U.S. Government purposes.}

\section{Introduction}
%
%
% \marginJC{The wording of this 1st paragraph is too similar to previous write-ups in our other papers!}
%
The integration of DERs has the potential to enhance power system performance and reduce greenhouse gas emissions. However, without proper regulation, the injection of power from DERs could have detrimental effects on the operation of distribution networks, such as causing significant voltage fluctuations~\cite{Mohammadi_2017_TPD}.
% To ensure their effective deployment, it is crucial to develop innovative control strategies.
Fortunately, modern smart inverters are often equipped with sensing and computational capabilities, enabling DERs to learn effective control policies from data in DNs to perform ancillary services. This paper introduces a framework for synthesizing local control rules that leverage historical load data to optimize the operation of the distribution network with provable performance guarantees.

\subsubsection*{Literature Review}
Classically, generator power outputs are computed off-line by the system operator solving \emph{optimal power flow} (OPF) problems. However, the intermittency of renewable generation, the fast variability of load demands, and the uncertainty affecting DN parameters,
call for on-line closed-loop strategies which rely on measurements to update the power injections~\cite{bernstein2019real}.
Local control schemes, where decision-making occurs based on locally available information, are particularly suitable for DNs without access to real-time communication networks, as exemplified by the IEEE standard 1547~\cite{IEEE1547}. However, local schemes have intrinsic performance limitations and in general do not provide optimal performance~\cite{bolognani2019need}.
Recent advances have seen the incorporation of data-driven techniques in the design of local controllers, aiming to reduce their optimality gap compared to centralized solutions.
While most of this research has centered on Volt/Var control, i.e., the task of regulating the voltage profiles by adjusting DERs' reactive power outputs, a key consideration has emerged -- system stability.
%The system stability, although not assessed in several earlier works, has become a crucial feature for state-of-the-art solutions. 
The work~\cite{murzakhanov2023optimal} designs optimal control curves complying with the form and constraints provisioned by the IEEE 1547 Standard, ensuring stable Volt/Var dynamics. Additionally,~\cite{gupta2023deep} employs a recursive neural network whose weights correspond to the parameters of a stable reactive power control rule.
More recent efforts seek to learn stability-guaranteed nonlinear Volt/Var controllers leveraging structured neural networks, see e.g., reinforcement learning~\cite{WC-JL-BZ:22,JF-YS-GQ-SHL-AA-AW:22,Feng2023bridging} and supervised learning~\cite{yuan2023constraints,yuan2023learning} approaches.
In the context of unsupervised learning, existing results, e.g.,~\cite{gupta2021controlling}, generally do not consider system stability.

Smart inverters come equipped with the capability to regulate voltage by adjusting their reactive power injections. However, due to their limited apparent power ratings, especially when photovoltaic (PV) outputs are high, they often resort to active power curtailment for voltage regulation~\cite{XS-MM-PW-2014}. Nevertheless, the design of active power curtailment strategies, if not carefully balanced for fairness among customers, can result in an unequal distribution of burdens, particularly affecting customers located farther from the substation in terms of electrical distance~\cite{SX-YX-LC-2021, SP-MM-RS-AR:23}.
To address this challenge, various approaches, including droop-based and optimization-based methods, have been proposed to incorporate fairness considerations into active power curtailment strategies~\cite{SP-MM-RS-AR:23,SA-YW-CW-JZ-BZ-2014, PL-LA-SC-AL-GT-2019}. However, the fairness-promoting penalties proposed in the existing literature may not be suitable for designing data-driven controllers. Such controllers require penalties to be differentiable and to have the capacity to model amortized fairness, focusing on the fairness of decisions made over time rather than instantaneous fairness.

\subsubsection*{Statement of Contributions}
This paper proposes a framework for devising local power control schemes that serve as local approximations of OPF solvers.
These approximations, referred to as \emph{equilibrium functions}, map local information to active and reactive power setpoints. 
The contributions of this paper contribution are three-fold.
First, in contrast to our previous works~\cite{yuan2023constraints,yuan2023learning}, this approach adopts an unsupervised learning approach, eliminating the need for a labeled data set. This alleviates the burden of solving a large number of OPF instances off-line prior to training. Thus, the network operator can adapt the training process easily to account for changes in the network or the optimization objectives.
Second, the learned equilibrium functions depend on both the voltage magnitudes and the power injections, whereas similar approaches provides equilibrium functions that depend only on the voltages~\cite{yuan2023learning}, or that are separable in simpler components bases solely on voltage magnitudes and power injections~\cite{yuan2023constraints}.
We establish explicit conditions for these equilibrium functions to steer the network towards desirable configurations, as identified by the equilibrium functions, through an incremental algorithm controlling both active and reactive power injections.
Third, we introduce an equity-promoting penalty to facilitate learning equitable controllers defined by up to $N-1$ protected features of interest, where $N$ is the number of controllable nodes.
% Finally, small-sized generators are defined and characterized in terms of the potential contribution they can provide. 

\subsubsection*{Notation}
$\mathbb{R}$ and $\complex$ denote the set of real and complex numbers. Matrices and column vectors are denoted by upper and lower case boldface letters, respectively. 
Calligraphic symbols denote sets. 
Given a vector $\abf$ (resp., diagonal matrix $\mathbf A$), its $n$-th (diagonal) entry is denoted by $a_n$ $(A_n)$. 
%$\Abf \succ(\succeq)~0$ denotes that matrix $\Abf$ is positive (semi-) definite, and $\Abf \prec(\preceq)~0$ denotes that matrix $\Abf$ is negative (semi-) definite. 
The symbol $(\cdot)^\top$ stands for transposition, $\ones$, $\zeros$ vectors of all ones and zeros, respectively, and  $\Ibf$ denotes the identity matrix with appropriate dimensions.
Operators $\Re(\cdot)$ and $\Im(\cdot)$ extract the real and imaginary parts of a complex-valued argument, and act element-wise. 
With a slight abuse of notation, the symbol $|\cdot|$ denotes the absolute value for real-valued arguments and the cardinality when the argument is a set, whereas $\|\cdot\|$ represents the $\ell_2$ vector norm and the norm that it induces on matrices. Given a matrix with real eigenvalues, $\lambda_{\max}(\cdot)$ and $\lambda_{\min}(\cdot)$ respectively denote its largest and smallest eigenvalue. 
The condition number of a matrix $\mathbf A$ is $\kappa(\mathbf A) = \|\mathbf A\| \|\mathbf A^{-1}\|$.
%An eigenvalue $\lambda$ and its associated eigenvector $\boldsymbol{\xi}$ form the eigenpair $(\lambda, \boldsymbol{\xi})$. 
% The set of the possible output values of a function $\phi$, i.e., its range, is denoted as $\text{range}(\phi)$. 

%%%%%%%%%%%%%%%%%%%%%%%%%%%%%%%%%%%%%%%%%%%%%%%%%%%%%%%%%%%%%%%%%%%%%%%%%%%%%
%%%%%%%%%%%%%%%%%%%%%%%%%%%%%%%%%%%%%%%%%%%%%%%%%%%%%%%%%%%%%%%%%%%%%%%%%%%%%

\section{Distribution Grid Modeling}\label{sec:modeling}
A radial single-phase (or a balanced three-phase) DN having $N+1$ buses can be modeled by a tree graph $\mathcal G=(\tilde{\mathcal N},\mathcal E)$ rooted at the substation.
The nodes in $\tilde{\mathcal N}:=\{0,\ldots,N\}$ are associated with grid buses, and the edges in $\mathcal E$ with power lines. 
The substation, labeled as node 0, is modeled as a slack bus, i.e.,  an ideal voltage source imposing the nominal voltage of 1 p.u. All the other buses are collected in $\mathcal N = \{1,\ldots,N\}$.

If we neglect the shunt admittances, the grid bus admittance matrix $\tilde \Ybf \in \complex^{(N+1)\times(N+1)}$ is defined as   
\begin{align*}
(\tilde \Ybf)_{mn} = \begin{cases}
- y_{(m,n)} & \text{ if } (m,n) \in \Ec, m \neq n, \\
0 & \text{ if } (m,n) \notin \Ec, m \neq n, \\
\sum_{k \neq n} y_{(k,n)} & \text{ if }m = n.
\end{cases}
%\label{eq:busAdmMatrixDef}
\end{align*}
where $y_{(m,n)} \in \complex$ is the admittance of the line $(m,n)$ in~$\mathcal E$.
$\tilde \Ybf$ is symmetric and satisfies $\tilde \Ybf \ones = \zeros$. One can partition $\tilde \Ybf$ by separating the components associated with the substation and the ones associated with the other nodes to obtain 
\begin{align*}
\tilde \Ybf = \begin{bmatrix}
 y_{0}&\ybf_0^\top \\
 \ybf_0& \Ybf
 \end{bmatrix}
\end{align*}
where $y_{0} \in \complex$, $\ybf_0 \in \complex^{N}$, and $\Ybf \in \complex^{N \times N}$. The matrix $\Ybf$ is invertible if the network is connected~\cite{Kettner_2018_TPS}, which we assume; the real and the imaginary part of its inverse are denoted by $\tilde \Rbf:=\Re(\Ybf^{-1})$ and $\tilde \Xbf:=\Im(\Ybf^{-1}) \in \mathbb{R}^{N \times N}$, respectively.

The voltage magnitude at bus $n\in \Nc$ is denoted as $v_n\in \mathbb{R}$.
%The maximum and the minimum values in which the voltages are allowed to range in order to guarantee safe grid operation are $\overline{v}$ and $\underline{v}$.
In general, the power injection of each bus is composed by an uncontrollable and a controllable component.
The uncontrollable part is represented by the complex load $d_n \in \complex$, whose real (imaginary) part is the active (reactive) power demand at node $n$. 
The vector $\dbf \in \complex^N$ collects the power demand for all $n\in \mathcal N$.

Assume that the buses in the subset $\Cc \subseteq \Nc$, with  $|\Cc| = C$, can control their power injections to some extent, e.g., because they host DERs or controllable loads; $p_n$ and $q_n$ denote the active and the reactive powers that they can control. 
The remaining nodes constitute the load set $\Lc:=\Nc\setminus\Cc$.
The vectors $\pbf, \qbf \in \mathbb{R}^C$ collect $\{p_n\}_{n \in \mathcal C}$ and $\{q_n\}_{n \in \mathcal C}$, respectively.  
Powers take positive (negative) values, e.g., $p_n, q_n \geq 0$ ($p_n, q_n \leq 0$), when they are \emph{injected into} (\emph{absorbed from}) the grid.

It is convenient to partition power demands and voltage magnitudes by grouping together the nodes belonging to the load and generation sets as follows,
% %
\begin{align*}%\label{eq:partitionV}
\dbf = \begin{bmatrix}
\dbf_\Cc^\top & \dbf_\Lc^\top
\end{bmatrix}^\top, \quad
\vbf = \begin{bmatrix}
\vbf_\Cc^\top & \vbf_\Lc^\top
\end{bmatrix}^\top.
\end{align*}
Accordingly, we partition the matrices $\tilde \Rbf$ and $\tilde \Xbf$ to obtain
\begin{align*}%\label{eq:partitionRX}
\tilde \Rbf = \begin{bmatrix}
\Rbf & \Rbf_\Lc \\
\Rbf_\Lc^\top & \Rbf_{\Lc\Lc}
\end{bmatrix}, \quad
\tilde \Xbf = \begin{bmatrix}
\Xbf & \Xbf_\Lc \\
\Xbf_\Lc^\top & \Xbf_{\Lc\Lc}
\end{bmatrix},
\end{align*}
% %
with $\Rbf, \Xbf \succ 0$, see~\cite{Kettner_2018_TPS}.
For each bus $n \in \Cc$, the active and the reactive power injections are bounded by minimum and maximum values, and therefore belong to the \emph{feasible sets} 
$$\Pc_n := \{p_{n}: \underline p_{n} \leq p_{n} \leq \overline{p}_{n}\}, \;
\Qc_n := \{q_{n}: \underline q_{n} \leq q_{n} \leq \overline{q}_{n}\}.$$
Even though powers influence voltages through the well-known nonlinear power flow equations, as customary in the literature of control of DERs~\cite{murzakhanov2023optimal,gupta2023deep,WC-JL-BZ:22} for stability analysis, we adopt the linearization
\begin{equation}
\vbf(\pbf,\qbf,\dbf) = \begin{bmatrix}
    \vbf_\Cc\\
    \vbf_\Lc
\end{bmatrix} = \begin{bmatrix}
    \Rbf \\
    \Rbf_\Lc^\top
\end{bmatrix} 
\pbf + \begin{bmatrix}
    \Xbf \\
    \Xbf_\Lc^\top
\end{bmatrix} 
\qbf + \hat{\bf v}
\label{eq:v=Rp+Xq}
\end{equation}
where 
$$\hat{\bf v} = \begin{bmatrix}
\hat \vbf_\Cc^\top & \hat \vbf_\Lc^\top
\end{bmatrix}^\top :=
\tilde{\Rbf} \Re(\dbf) + 
\tilde{\Xbf} \Im(\dbf) + \ones.$$
Therefore, we can write $\vbf_\Cc$ as
\begin{align}\label{eq:vc=Rp+Xq}
    \vbf_\Cc(\pbf, \qbf, \dbf) = \Rbf \pbf + \Xbf \qbf + \hat{\vbf}_\Cc(\dbf).
\end{align}
%DERs hosted in DGs are often referred to as \emph{small size generators}, which are  usually broadly defined as an electric power source that is not a part of a large central power system or  centrally controlled, located close to the loads and is smaller than 50–100 MW~\cite{ACHARYA2006669,PEPERMANS2005787,de2013hosting}. We provide instead the following formal definition. 
%
%\begin{definition}
%Let bus $n$ host a generator whose  complex power power rating is $S_n$. The generator is small sized  if
%
%\begin{equation}
%\label{eq:small_gen}
%    S_n \leq \frac{\overline{v}-\underline{v}}{2(\|\Rbf\| + \|\Xbf\|)}.
%\end{equation}
%

%\end{definition}
%The above definition is grid-aware in the sense that a generator is considered small based on the peculiar grid in which is located, see the denominator of~\eqref{eq:small_gen}. Roughly speaking, if inequality~\eqref{eq:small_gen} holds, we can fully exploit the generation capability of the DER in $n$ and guarantee the overall system stability. We will delve more on this interpretation in Section~\ref{sec:control}. 

%%%%%%%%%%%%%%%%%%%%%%%%%%%%%%%%%%%%%%%%%%%%%%%%%%%%%%%%%%%%%%%%%%%%%%%%%
%%%%%%%%%%%%%%%%%%%%%%%%%%%%%%%%%%%%%%%%%%%%%%%%%%%%%%%%%%%%%%%%%%%%%%%%%

\section{Problem Formulation and Proposed Approach}\label{sec:problem}
%
% \marginJC{It's ugly to have 1 subsection only inside a section: either $\ge 2$ or none!}
%
Ideally, the DER power injections should solve an \emph{optimal power flow (OPF) problem} of the form
\begin{equation}\label{eq:ORPF}
	\begin{aligned}
	(\pbf^\star,\qbf^\star) :=\arg\min_{\pbf, \qbf}\ &  f(\pbf,\qbf,\dbf) \\
	\mathrm{s.t.}\  & ~ \text{power flow equations}, \\
                        & ~\pbf \in \Pc, \qbf \in \Qc 
	\end{aligned} 
\end{equation}
where $\Pc = \times_{n\in \Cc} \Pc_n$, $\Qc = \times_{n\in \Cc} \Qc_n$ and $f$ is the cost function of interest.
Solving~\eqref{eq:ORPF} is usually difficult because of the non-convexity of the power flow equations.
%
%\marginJC{But (1) is linear, so the problem is convex :-) We should write instead "power flow eqs" or something like that}
%
Traditional approaches exploit convex relaxations, distributed optimization, and learning-based approaches, see e.g.,~\cite{bernstein2019real, Lam2012Distributed}. Such techniques rely on real-time communication among agents and the necessary supporting infrastructure is usually not present in DNs. Here instead we pursue the design of local communication-free control rules conveniently developed using neural-network-based surrogates of OPF solutions.
Precisely, historical data are used to learn, for each agent $n \in \mathcal C$, functions
\begin{align*}
\gamma_n: \mathbb{R}\times \complex \rightarrow \Pc_n, & \quad (v_n, d_n) \mapsto \gamma_n(v_n, d_n) \\
\xi_n: \mathbb{R}\times \complex \rightarrow \Qc_n, & \quad (v_n, d_n) \mapsto \xi_n(v_n, d_n) 
\end{align*}
where $\gamma_n$ ($\xi_n$) maps local voltages and power demands into approximate optimal active (reactive) power injections.
Section~\ref{sec:control} below presents
a local control scheme steering the grid toward equilibrium points 
%of the form $(v_n,d_n,\gamma_n(v_n, d_n),  \xi_n(v_n, d_n))$ which represent the OPF solutions' surrogates and then describe desirable network configurations and are 
identified by the graphs of~$\gamma_n$ and~$\xi_n$. For this reason, we refer to them as \emph{equilibrium functions}. 
We collect the~$\{\gamma_n\}_{n \in \Cc}$ and $\{\xi_n\}_{n \in \Cc}$ into the vector-valued functions 
$$\gammab: \mathbb{R}^C \times \mathbb{C}^C \mapsto \mathcal P, \qquad \xib: \mathbb{R}^C \times \mathbb{C}^C \mapsto \mathcal Q.$$ 

A common~\cite{WC-JL-BZ:22,yuan2023learning} choice of cost function $f$ in~\eqref{eq:ORPF} is a function penalizing  voltage deviations
%of the network's voltage magnitudes from the flat voltage profile. Since the voltage is dependent on the DERs set points produced by the controller, this loss function can be formulated as
%
%
%\marginJC{This def is a bit misleading b/c the notation says $f$ depends on $\pbf,\qbf,\dbf$, but the right-hand side only depends on $v_C, d$}
%\marginZY{We can directly use here the expression of $\vbf$ in terms of $\pbf,\qbf,\dbf$.}
%
%
\begin{align}\label{eq:voltage-deviation}
%    f_v(\pbf,\qbf,\dbf) = \Big\| {\bf v}_\Cc \big(\gammab(\vbf_\Cc, \dbf_\Cc), \xib(\vbf_\Cc, \dbf_\Cc), \dbf \big) - \boldsymbol{1} \Big\|_2^2
    f_v(\pbf,\qbf,\dbf) = \big\| {\bf v} (\pbf,\qbf,\dbf) - \boldsymbol{1} \big\|^2.
\end{align}
%
% \marginZY{What I did here is to minimize $\vbf$ instead of $\vbf_\Cc$, so also in Section~\ref{sec:learning}, the sampling should also be across all the nodes. I think this should be fine right? The difference is just which information you sample. Yes it is fine}
%
%where $\gamma(\cdot,\cdot)$, and $\xi(\cdot,\cdot)$ are the concatenation of all $\gamma_n(\cdot,\cdot)$ and $\xi_n(\cdot,\cdot)$, respectively. 
The dependence of $\vbf$ on $\pbf,\qbf$, and $\dbf$ can be approximated using~\eqref{eq:v=Rp+Xq} making~\eqref{eq:voltage-deviation} a linear least-squares problem.
Hence, the differentiation process required for learning $\gammab$ and $\xib$ becomes simple to carry out. Other formulations of the loss functions may consider the electric losses in the network, or the deviation from a pre-determined substation power trajectory.

\subsubsection*{Equitable Controllers Design}
The proposed learning-based control approach determines both the active and reactive power injections of the inverters for voltage regulation within a distribution network. In scenarios with significant penetration of renewable energy sources, control strategies tend to curtail power injections from DERs to regulate voltage levels. The frequent implementation of such control actions at specific locations can necessitate the installation of energy storage systems to manage the frequent curtailment of renewable generation. Studies in the literature, e.g.,~\cite{SX-YX-LC-2021}, have shown active power curtailments tend to increase with electrical distance to the substation.

Therefore, it becomes imperative to address the fairness and equity aspects of the proposed learning-based controllers. To achieve this, consider a protected feature, denoted as $z_n$, associated with the DER located at node~$n$. This feature represents any characteristic which we want the control decisions to become independent of, for instance, the electrical distance to the substation. Note that merely excluding the use of $z_n$ in the learning algorithm is insufficient to ensure this independence, as the controller may inadvertently learn to rely on other features that are correlated with the protected feature. Thus, in the proposed learning approach, we introduce a penalty term to the learning loss. This penalty term encourages control decisions to be independent of the protected feature(s), ensuring fairness in the control strategy. 

We normalize all $z_n$ such that $\|z_n\| = 1$ for $n$ in $\Cc$, and let ${\bf z}_{\Cc}$ collect all $z_n$ for $n$ in $\Cc$. We define the equity-promoting penalty as
\begin{align}\label{eq:equity-penalty}
    {f}_{eq}(\pbf, \qbf, \dbf) =  \left\| \langle \boldsymbol{\gamma}(\vbf_{\Cc}, {\bf d}_{\Cc}), {\bf z}_{\Cc} \rangle\right\| .
\end{align}
This penalty promotes the decision vectors to be orthogonal on the protected features vector, promoting that decisions are independent of the protected feature when the penalty is added to the voltage regulation loss in~\eqref{eq:voltage-deviation}.

It is worth highlighting that the design of this penalization approach allows for the incorporation of not just a single protected feature but multiple ones. Specifically, we may include up to $N-1$ protected features, where the aim is for the control approach not to depend on them. This limit comes from the fact that the penalty will encourage the control policy to reside within the null space of the protected features vectors. This null space in general vanishes when considering $N$ protected features or more. It is also worth noting that with the increase of protected features, the control policy will be restricted to a smaller space. Hence, it is expected that with the inclusion of more protected features, the performance of the proposed controller will be affected in terms of the voltage regulation loss~$f_v$.

\section{Stable Incremental Control Strategy}\label{sec:control}

We propose to control the DER hosted by bus $n$ according to the \emph{incremental} strategy
\begin{equation}\label{eq:power_update}
\begin{bmatrix}
p_n(t+1) \\ q_n(t+1)
\end{bmatrix} = (1 - \epsilon)
\begin{bmatrix}
p_n(t) \\ q_n(t)
\end{bmatrix} +
\epsilon
\begin{bmatrix}
\gamma_n(v_n(t),d_n(t)) \\ \xi_n(v_n(t),d_n(t))
\end{bmatrix}
\end{equation}
where $\epsilon \in [0,1]$.
Note that, if $p_{n}(0)  \in \Pc_n$ and $\ q_{n}(0) \in \Qc_n$, then $p_{n}(t)$ and $q_{n}(t)$ are feasible for every $t \geq 0$.
Let $p_n^\star,q_n^\star,v_n^\star$ be the power injections and the voltage magnitude at an equilibrium of~\eqref{eq:power_update}; it holds that 
\begin{align}\label{eq:eq_point}
\begin{bmatrix}
p_{n}^\star \\ q_{n}^\star
\end{bmatrix} = 
\begin{bmatrix}
\gamma_n(v_n^\star, d_n) \\ \xi_n(v_n^\star, d_n)
\end{bmatrix}
\end{align}
i.e., equilibria are exactly identified  by the equilibrium functions $\gamma_n(v_n, d_n)$ and $\xi_n(v_n, d_n)$.
%and are desirable network configurations.

%We present an equilibrium functions design in the following, and study the associated stability properties. To do so, 
%
Inspired by the solutions proposed in the literature, see~\cite{kashani2018smart,seuss2016advanced,turitsyn2011options,HZ-HJL:15}, or even adopted by utilities, see~\cite{IEEE1547,emmanuel2020estimation}, we consider equilibrium functions that are non-increasing in~$v_n$. 
Before stating the main result characterizing system stability, we define the following auxiliary variables

\vspace*{-2ex}
\begin{small}
$$L_p := \max_n \max_{v_n} \left|\frac{\partial \gamma_n (v_n, d_n)}{\partial v_n}\right|, L_q := \max_n \max_{v_n} \left| \frac{\partial \xi_n(v_n, d_n)}{\partial v_n}\right|$$
\end{small}

\noindent which represent the maximum slope absolute values across equilibrium functions.
Second, define the scalar $\alpha^*$ as the solution of the following optimization problem
\begin{equation}
\alpha^* := \arg \min_{\alpha \geq 0} \|\Xbf - \alpha \Rbf\|^2
\label{eq:alpha_prob}
\end{equation}
and the matrix $\hat \Xbf := \Xbf - \alpha^* \Rbf$. Note that~\eqref{eq:alpha_prob} is strictly convex and hence $\alpha^*$ is well defined.

The next result, proved in the Appendix~\ref{app:proofs}, identifies sufficient conditions on the equilibrium functions that ensure the stability of~\eqref{eq:dyn_sys}. 
We will assume that load demands are fixed in time,
$$d_n(t) = d_n, \; n\in \mathcal N, \; t\geq 0.$$
This is motivated by the fact that we envision the control algorithm~\eqref{eq:power_update} acting on a time scale that is faster than the load variability. 
%For simplicity, in what follows we refer to $\vbf_\Cc$ and $\dbf_\Cc$ as $\vbf$ and $\dbf$, respectively.
%
%\marginJC{This notational drop is confusing, b/c both $\vbf$ and $\dbf$ already have meaning in Section II, and they are not the same.}
%\marginZY{I agree, let's keep using the subscript.}
%
% Nevertheless, all the simulations in Section~\ref{sec:sim} were carried out using an exact AC power flow solver.
Using~\eqref{eq:v=Rp+Xq} and~\eqref{eq:power_update} yields  the dynamical system
\begin{subequations}\label{eq:dyn_sys}
\begin{align}
    \begin{bmatrix}
        \pbf(t+1) \\ \qbf(t+1)
    \end{bmatrix} & = (1-\epsilon)
    \begin{bmatrix}
        \pbf(t) \\ \qbf(t)
    \end{bmatrix} + \epsilon 
    \begin{bmatrix}
        \gammab(\vbf_\Cc(t),\dbf_\Cc) \\ 
        \xib(\vbf_\Cc(t),\dbf_\Cc)
    \end{bmatrix}   \label{eq:dyn_sys:pwr_upd}\\
\vbf_\Cc(t) &= 
\Rbf  \pbf(t) + 
\Xbf  \qbf(t) + \hat \vbf_\Cc.
\end{align}    
\end{subequations}

\begin{theorem}\longthmtitle{Global asymptotic stability of incremental control strategy}\label{thm:NIF_stable}
The system~\eqref{eq:dyn_sys} has an unique and globally asymptotically stable equilibrium point if
\begin{subequations}
\begin{align}
&\frac{\partial \gamma_n (v_n, d_n)}{\partial v_n} \leq 0, \ \frac{\partial \xi_n (v_n, d_n)}{\partial v_n} \leq 0,~n \in \Cc,
\label{eq:nonincr_cond}\\
& L_q < \frac{1}{\kappa(\Rbf^{\frac 1 2}) \|\hat\Xbf\|}, \label{eq:nonincr_cond_a} \quad \text{and}\\
& \epsilon < \min \Big\{1, \frac{2}{1 \!+\! \kappa(\Rbf^{\frac 1 2}) L_q \|\hat\Xbf\| \!+\!  (L_p \!+\! \alpha^* L_q) \|\Rbf\|} \Big\}.
\label{eq:nonincr_cond_b}
\end{align}
\end{subequations}
\end{theorem}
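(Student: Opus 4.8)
The plan is to recast the iteration \eqref{eq:dyn_sys} as a fixed-point map and show it is a uniform contraction in a weighted norm, so that Banach's theorem delivers a unique equilibrium (necessarily the one characterized by \eqref{eq:eq_point}) to which all trajectories converge. Substituting the voltage relation into \eqref{eq:dyn_sys:pwr_upd}, I write the dynamics as $\ubf(t+1) = T(\ubf(t))$ with $\ubf = (\pbf^\top, \qbf^\top)^\top$ and $T(\ubf) = (1-\epsilon)\ubf + \epsilon\,\Psi(\Hbf\ubf + \hat\vbf_\Cc)$, where $\Hbf := [\,\Rbf \ \ \Xbf\,]$ is constant and $\Psi$ stacks $\gammab$ and $\xib$. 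Equilibria of \eqref{eq:dyn_sys} are exactly the fixed points of $T$. For two points the mean value theorem gives $T(\ubf_1) - T(\ubf_2) = [(1-\epsilon)\Ibf + \epsilon\,\Gbf\Hbf](\ubf_1 - \ubf_2)$, where $\Gbf = [\Gbf_\gamma^\top \ \ \Gbf_\xi^\top]^\top$ collects the diagonal Jacobian blocks $\Gbf_\gamma = \diag(\partial\gamma_n/\partial v_n)$ and $\Gbf_\xi = \diag(\partial\xi_n/\partial v_n)$ along the segment; by \eqref{eq:nonincr_cond} these are nonpositive with entries bounded in magnitude by $L_p$ and $L_q$. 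Left-multiplying by the constant $\Hbf$, the voltage difference $\Delta\vbf := \Hbf(\ubf_1-\ubf_2)$ then propagates through the reduced $C\times C$ matrix $(1-\epsilon)\Ibf + \epsilon(\Rbf\Gbf_\gamma + \Xbf\Gbf_\xi)$.

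The core step is to bound this reduced matrix below $1$ in operator norm, uniformly over all admissible $\Gbf_\gamma, \Gbf_\xi$. I would use the weighted norm induced by $\|\xbf\|_* := \|\Rbf^{-1/2}\xbf\|$, under which $\|\Mbf\|_* = \|\Rbf^{-1/2}\Mbf\Rbf^{1/2}\|$, so the reduced matrix becomes $(1-\epsilon)\Ibf + \epsilon\Mbf$ with $\Mbf := \Rbf^{-1/2}(\Rbf\Gbf_\gamma + \Xbf\Gbf_\xi)\Rbf^{1/2}$. Substituting $\Xbf = \alpha^*\Rbf + \hat\Xbf$ splits $\Mbf = \Sbf + \Ebf$ into the symmetric part $\Sbf := \Rbf^{1/2}(\Gbf_\gamma + \alpha^*\Gbf_\xi)\Rbf^{1/2}$ and the remainder $\Ebf := \Rbf^{-1/2}\hat\Xbf\Gbf_\xi\Rbf^{1/2}$. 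Here \eqref{eq:nonincr_cond} is essential: it forces $\Gbf_\gamma + \alpha^*\Gbf_\xi \preceq 0$, hence $\Sbf \preceq 0$ with $\|\Sbf\| \le (L_p + \alpha^* L_q)\|\Rbf\|$, while submultiplicativity gives $\|\Ebf\| \le \kappa(\Rbf^{1/2})\|\hat\Xbf\|\,L_q$.

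I then estimate $\|(1-\epsilon)\Ibf + \epsilon\Mbf\| \le \|(1-\epsilon)\Ibf + \epsilon\Sbf\| + \epsilon\|\Ebf\|$. Since $\Sbf$ is symmetric with spectrum in $[-\|\Sbf\|,0]$, the first term equals $\max\{\,1-\epsilon,\ \epsilon\|\Sbf\| - (1-\epsilon)\,\}$. When $1-\epsilon$ dominates, the total is at most $1 - \epsilon(1 - \|\Ebf\|)$, which is strictly below $1$ precisely because \eqref{eq:nonincr_cond_a} makes $\|\Ebf\| \le \kappa(\Rbf^{1/2})\|\hat\Xbf\|L_q < 1$; in the complementary regime the total is at most $\epsilon(1 + \|\Sbf\| + \|\Ebf\|) - 1$, which is below $1$ exactly when $\epsilon < 2/(1 + \|\Sbf\| + \|\Ebf\|)$, and inserting the two norm bounds shows that \eqref{eq:nonincr_cond_b} guarantees this. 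This produces a uniform contraction factor $\rho < 1$, independent of the segment, so $\Delta\vbf(t) \to 0$ geometrically; feeding this into $\ubf(t+1)-\ubf^\star = (1-\epsilon)(\ubf(t)-\ubf^\star) + \epsilon\,\Gbf(t)\,\Delta\vbf(t)$, whose homogeneous part decays since $\epsilon > 0$, forces $\ubf(t)$ itself to converge. Existence follows since $T$ maps the compact convex box $\Pc\times\Qc$ into itself, and uniqueness from the fact that any two equilibria share the same voltage (by the contraction) and hence coincide; global asymptotic stability is then immediate.

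The delicate point is that $\Rbf\Gbf_\gamma + \Xbf\Gbf_\xi$ is \emph{not} symmetric, so a plain Euclidean eigenvalue bound is insufficient. The crux is the combined choice of the $\Rbf^{1/2}$ weighting and the splitting $\hat\Xbf = \Xbf - \alpha^*\Rbf$, which renders the dominant part symmetric negative semidefinite and confines the non-symmetric remainder to a perturbation of controllable size—this is exactly what the definition \eqref{eq:alpha_prob} of $\alpha^*$ is engineered to minimize. A secondary subtlety is that the slope matrix $\Gbf$ varies along trajectories, so the estimate must hold uniformly over all admissible $\Gbf_\gamma, \Gbf_\xi$; this uniformity is what upgrades a spectral (local) condition into the claimed global asymptotic stability.
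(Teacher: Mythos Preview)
Your proposal is correct and follows essentially the same route as the paper: reduce to the voltage-difference map, conjugate by $\Rbf^{1/2}$, split $\Xbf=\alpha^*\Rbf+\hat\Xbf$ so that the dominant part $\Rbf^{1/2}(\Gbf_\gamma+\alpha^*\Gbf_\xi)\Rbf^{1/2}$ is symmetric negative semidefinite, bound the non-symmetric remainder by $\kappa(\Rbf^{1/2})\|\hat\Xbf\|L_q$, and then run the two-case spectral estimate that produces exactly conditions \eqref{eq:nonincr_cond_a}--\eqref{eq:nonincr_cond_b}. Your treatment is in fact slightly more complete than the paper's appendix: you make existence explicit via Brouwer on the compact box $\Pc\times\Qc$, and you spell out the cascade argument that lifts voltage contraction back to convergence of the full state $(\pbf,\qbf)$, points the paper leaves implicit.
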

Note that introduction of $\alpha^*$ in~\eqref{eq:alpha_prob} makes the condition on $L_q$ in~\eqref{eq:nonincr_cond_a} easier to satisfy. Another key feature of the above design is that we require $\{\gamma_n\}_{n \in \Cc}$ and $\{\xi_n\}_{n \in \Cc}$ to be non-increasing functions of $v_n$, and thus in the following we refer to this as \emph{non-increasing function} (NIF) design. 
Besides the stability consideration, the intuition behind the NIF design also comes from the fact that voltage magnitudes are in general increasing functions of the active and reactive powers, cf.~\eqref{eq:vc=Rp+Xq}. In an effort to bring voltages as close as possible to the nominal one, see~\eqref{eq:voltage-deviation}, generators should inject more (less) power when the voltage is lower (higher) primarily to support system voltage levels, which is consistent with the behavior of NIF design.

\section{Unsupervised Learning of Equilibrium Functions}%Optimal Control Strategies}
\label{sec:learning}
%In the optimal control task, the aim is to design a the controllers $\gamma_n(\cdot)$ and $\xi_n(\cdot)$ for all $n\in\mathcal{C}$. Because of their approximation capabilities, neural networks are utilized to approximate the functions $\gamma_n(\cdot)$ and $\xi_n(\cdot)$. Let $\theta_n$ ($\phi_n$) be the parameters of the neural network used to approximate $\gamma_n(\cdot)$ ($\xi_n(\cdot)$). Then, we concatenate the neural network parameters of all controllers in $\boldsymbol{\theta}$ and $\boldsymbol{\phi}$, respectively..  
Our goal here is to learn the equilibrium functions $\{\gamma_n\}_{n \in \Cc}$ and $\{\xi_n\}_{n \in \Cc}$ by training neural networks. The training is performed in an off-line fashion. Let $\theta_n$ ($\phi_n$) be the parameters of the neural network used to learn $\gamma_n(\cdot)$ ($\xi_n(\cdot)$) and collect them in the vectors $\boldsymbol{\theta}$ ($\boldsymbol{\phi}$). 
In the following, we explicitly denote the dependence of the equilibrium functions from the neural network parameters as $\boldsymbol{\gamma}({\bf v}_\Cc, {\bf d}_\Cc; \boldsymbol{\theta}), \boldsymbol{\xi}({\bf v}_\Cc, {\bf d}_\Cc; \boldsymbol{\phi})$

The equilibrium function design can be formulated as the task of selecting $\boldsymbol{\theta}$ and $\boldsymbol{\phi}$ to minimize an operational cost such as regulating voltage. To address varying conditions of the system, we sample diverse load conditions $\{{\bf d}^{(m)}\}_{m=1}^M$ and their corresponding voltage magnitudes $\{{\bf v}^{(m)}\}_{m=1}^{M}$. With a slight abuse of notation, we write the losses evaluated for the $m$-th sample as $f_v(\vbf^{(m)}, \dbf^{(m)})$ (or $f_v^{(m)}$) and $f_{eq}(\vbf^{(m)}, \dbf^{(m)})$ (or $f_{eq}^{(m)}$) given their dependence on the voltage magnitudes and the load demand through the $\boldsymbol{\gamma}$ and $\boldsymbol{\xi}$. Then, the samples are used obtain the controllers design as follows
\begin{equation}\label{eq:controller-design-1}
    \begin{aligned}
    (\boldsymbol{\theta}^\star, \boldsymbol{\phi}^\star) \in \arg\min_{\boldsymbol{\theta}\in\tilde{\boldsymbol{\Theta}},\
    \boldsymbol{\phi}\in\tilde{\boldsymbol{\Phi}}}\quad \frac{1}{M} \sum_{m=1}^{M} &f_v(\vbf^{(m)}, \dbf^{(m)}) + \\ \lambda &f_{eq}(\vbf^{(m)}, \dbf^{(m)}) .
\end{aligned}
\end{equation}
where 
$\lambda$ is a tunable penalty parameter and the loss functions for the $m$-th sample are given by
\begin{align*}
% &f_v^{(m)} = \big\| {\bf R} \boldsymbol{\gamma}({\bf v}^{(m)}, {\bf d}^{(m)}; \boldsymbol{\theta}) + {\bf X} \boldsymbol{\xi}({\bf v}^{(m)}, {\bf d}^{(m)}; \boldsymbol{\phi}) + \hat{\bf v} - \boldsymbol{1}\big\|_2^2 \\
&f_v^{(m)} = \big\| \vbf \big(\boldsymbol{\gamma}({\bf v}_\Cc^{(m)}, {\bf d}_\Cc^{(m)}; \boldsymbol{\theta}),\boldsymbol{\xi}({\bf v}_\Cc^{(m)}, {\bf d}_\Cc^{(m)}; \boldsymbol{\phi}), \dbf \big) - \boldsymbol{1} \big\|^2, \\
&f_{eq}^{(m)} = \big\| \langle \boldsymbol{\gamma}({\bf v}_\Cc^{(m)}, {\bf d}_\Cc^{(m)}; \boldsymbol{\theta}), \zbf_\Cc \rangle\big\|.
\end{align*}
The feasibility sets $\tilde{\boldsymbol{\Theta}}$ and $\tilde{\boldsymbol{\Phi}}$ are defined in the space of the neural network parameters. We identify them later to ensure that the conditions required for the stability of the control strategy are satisfied, 
After $\boldsymbol{\theta}^\star$ and $\boldsymbol{\phi}^\star$ are identified, the local control roles are deployed at all DERs. 

Problem~\eqref{eq:controller-design-1} can be tackled using a stochastic projected gradient descent (SPGD) approach as follows:
\begin{align*}
    \boldsymbol{\theta}^{k+1} &= \bigg[\boldsymbol{\theta}^k - \frac{\delta}{2B} \grad_{\boldsymbol{\theta}} \big( \sum_{m\in\mathbb{B}} f_v^{(m)} + \lambda f_{eq}^{(m)} \big) \bigg]_{\tilde{\boldsymbol{\Theta}}}
    \\
    \boldsymbol{\phi}^{k+1} &= \bigg[\boldsymbol{\phi}^k - \frac{\delta}{2B} \grad_{\boldsymbol{\phi}} \big( \sum_{m\in\mathbb{B}}f_v^{(m)} + \lambda f_{eq}^{(m)} \big) \bigg]_{\tilde{\boldsymbol{\Phi}}}
\end{align*}
where $\delta$ denotes the step size (learning rate), the set $\mathbb{B}$ is a batch of $B$ scenarios,
%
%\marginJC{In the equation, we write $\mathbb{B}_i$. What does the subindex $i$ mean?}
%\marginZY{I think this is a typo, have removed it.}
%
and the operators $[\cdot]_{\tilde{\boldsymbol{\Theta}}}$ and $[\cdot]_{\tilde{\boldsymbol{\Phi}}}$ denote the projection operator over $\tilde{\boldsymbol{\Theta}}$ and $\tilde{\boldsymbol{\Phi}}$, respectively.

This approach circumvents the need to produce training samples, which require solving many instances of the optimization problem offline before training the controllers. In addition, network operators can easily adapt their training process online to account for changes in the network or to modify the optimization objective.
%
% \marginJC{Would be nice (no time before the submission, but we should do it after) to illustrate in sim cases when the network changes.}
%
On the other hand, this approach requires to differentiate through the network model during training, which can be done easily when a linearized power flow model is used. Note that if the model is not available to differentiate through, then the problem can be formulated as a reinforcement learning policy optimization problem, something which we do not pursue here.

One key aspect of the proposed controllers is the stability-ensuring constraints~\eqref{eq:nonincr_cond} and \eqref{eq:nonincr_cond_a} on the equilibrium functions to enable their safe integration within distribution networks. These stability constraints  translate into constraints on the neural network parameters. Consequently, we design the neural networks in a way that makes the projection process easy to implement while not restricting their representation capabilities. For the controller at bus $n\in\mathcal{C}$, we use a single-layer neural network
%
%\marginJC{Zhenyi, have you read "Structured Neural-PI Control for Networked Systems: Stability and Steady-State Optimality Guarantees". I think they propose a method there to use deep neural networks while ensuring monotonicity too.}
%\marginZY{Yes I noticed this paper a while ago, seems the DNN in that paper consider only one output. But I'll think about the multiple-outputs case in the future.}
%
to map the inputs $(v_n, d_n)$ to the control actions $(p_n, q_n)$ as
\begin{equation}
    \begin{aligned}
    \begin{bmatrix}
        p_n\\
        q_n
    \end{bmatrix} = \sum_{h=1}^{H} \begin{bmatrix}
        w_{p,n}^{(h)}\\
        w_{q,n}^{(h)}
    \end{bmatrix} \text{tanh} \Big( &a_n^{(h)} v_n +   b_n^{(h)} p_{L,n} +\\ &c_{n}^{(h)} q_{L, n} +        d_n^{(h)} \Big) + \begin{bmatrix}
        e_{p,n}\\
        e_{q,n}
    \end{bmatrix}
    \end{aligned}
\end{equation}
where we separate $d_n$ into its real and imaginary parts $p_{L, n}$ and $q_{L, n}$. The neural network parameters for the controller at bus $n$ are $w_{p,n}$, $w_{q,n}$, $a_n^{(h)}$, $b_n^{(h)}$, $c_n^{(h)}$, $d_n^{(h)}$, $e_{p,n}^{(h)}$, and $e_{q,n}^{(h)}$ for all value of $h\in\{1, \ldots, H\}$. We choose the `tanh$(\cdot)$' activation function because, as we show later, it simplifies our projections to satisfy neural network parameter constraints.
%while not compromising the capability of the neural network to approximate any function.
%
%\marginJC{We don't have a formal statement of the approx property. Where do we mean by "we show later".}
%\marginZY{Yes here we don't have a formal result for universal approximation. I think we can just delete the statement on approximation property}
%

The next result provides sufficient conditions for the neural network parameters to ensure that the requirements~\eqref{eq:nonincr_cond} and~\eqref{eq:nonincr_cond_a} are met. The value of the stepsize parameter $\epsilon$ can then be set ex post according to~\eqref{eq:nonincr_cond_b}.
%
% \begin{proposition}
%     If the neural network parameters satisfy
%     \begin{align}\label{eq:nn-conditions-2}
%         \sum_{h=1}^H |w_{q,n}^{(h)} a_{n}^{(h)}| \leq \frac{1}{\kappa(\Rbf^{\frac{1}{2}}) \|\hat{\Xbf}\|},\qquad \forall~n \in \Cc
%     \end{align}
%     Then, condition \eqref{eq:nonincr_cond_a} is satisfied.
% \end{proposition}

% I THINK it would be nice to have all the conditions together in the following, WHAT DO YOU THINK? 
\begin{proposition}\longthmtitle{Conditions on neural network parameters}
\label{prop:NNparameters}
Conditions \eqref{eq:nonincr_cond} and \eqref{eq:nonincr_cond_a} are satisfied if the neural network parameters are such that
\begin{subequations}
\begin{align}
& a_n^{(h)} \geq 0, w_{p,n}^{(h)}, w_{q, n}^{(h)} \leq 0, \ \quad \forall~h=\{1,...,H\},
\label{eq:nn-conditions-1}
\\
& \sum_{h=1}^H |w_{q,n}^{(h)} a_{n}^{(h)}| \leq \frac{1}{\kappa(\Rbf^{\frac{1}{2}}) \|\hat{\Xbf}\|},\quad \forall~n \in \Cc.\label{eq:nn-conditions-2}
\end{align}
\end{subequations}
\end{proposition}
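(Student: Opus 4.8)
The plan is to verify both requirements by directly differentiating the single-layer neural network parametrization of $\gamma_n$ and $\xi_n$ with respect to $v_n$. Writing the pre-activation argument as $z_n^{(h)} := a_n^{(h)} v_n + b_n^{(h)} p_{L,n} + c_n^{(h)} q_{L,n} + d_n^{(h)}$ and recalling that $p_{L,n}, q_{L,n}$ are components of the fixed demand $d_n$ while $e_{p,n}, e_{q,n}$ are constant biases, the only $v_n$-dependence enters through the $z_n^{(h)}$. Using $\tfrac{d}{dz}\tanh(z) = 1 - \tanh^2(z)$, the chain rule gives
\[
\frac{\partial \gamma_n}{\partial v_n} = \sum_{h=1}^{H} w_{p,n}^{(h)} a_n^{(h)} \big(1 - \tanh^2(z_n^{(h)})\big), \qquad \frac{\partial \xi_n}{\partial v_n} = \sum_{h=1}^{H} w_{q,n}^{(h)} a_n^{(h)} \big(1 - \tanh^2(z_n^{(h)})\big).
\]
The single fact I would lean on throughout is that $1 - \tanh^2(z) \in (0,1]$ for every $z$.

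To establish the monotonicity requirement \eqref{eq:nonincr_cond}, I would observe that under \eqref{eq:nn-conditions-1} every summand above is a product of a nonpositive weight ($w_{p,n}^{(h)}$ or $w_{q,n}^{(h)} \le 0$), a nonnegative slope $a_n^{(h)} \ge 0$, and the strictly positive factor $1 - \tanh^2(z_n^{(h)})$. Each term is therefore nonpositive, and summation preserves the sign, so $\partial \gamma_n / \partial v_n \le 0$ and $\partial \xi_n / \partial v_n \le 0$ for all $n \in \Cc$, as claimed. For the slope bound \eqref{eq:nonincr_cond_a}, I would apply the triangle inequality to $\partial \xi_n / \partial v_n$ and then use $1 - \tanh^2 \le 1$ to obtain, for every $v_n$,
\[
\Big| \frac{\partial \xi_n}{\partial v_n} \Big| \le \sum_{h=1}^{H} |w_{q,n}^{(h)} a_n^{(h)}| \big(1 - \tanh^2(z_n^{(h)})\big) \le \sum_{h=1}^{H} |w_{q,n}^{(h)} a_n^{(h)}|.
\]
Maximizing over $v_n$ and over $n$ yields $L_q \le \max_{n} \sum_{h} |w_{q,n}^{(h)} a_n^{(h)}|$, which by \eqref{eq:nn-conditions-2} is at most $\frac{1}{\kappa(\Rbf^{\frac{1}{2}}) \|\hat\Xbf\|}$, giving \eqref{eq:nonincr_cond_a}.

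The computations are routine; the only genuine subtlety is reconciling the strict inequality demanded by \eqref{eq:nonincr_cond_a} with the non-strict bound \eqref{eq:nn-conditions-2}. Since $1 - \tanh^2(z) = 1$ only at $z = 0$, the middle inequality above is strict unless all $z_n^{(h)}$ vanish at a common admissible $v_n$; I would either invoke this genericity to conclude that $L_q$ lies strictly below the bound, or—more cleanly—read \eqref{eq:nn-conditions-2} as a strict inequality so that the chain closes rigorously. Note also that no analogous bound on $L_p$ is needed, since $L_p$ enters only the stepsize condition \eqref{eq:nonincr_cond_b}, which is set ex post and is not part of this statement.
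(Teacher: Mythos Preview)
Your proof is correct and follows essentially the same approach as the paper: compute the derivatives via the chain rule (the paper writes $\mathrm{sech}^2$ where you write $1-\tanh^2$), use the sign constraints in \eqref{eq:nn-conditions-1} to obtain \eqref{eq:nonincr_cond}, and then bound $|\partial \xi_n/\partial v_n|$ termwise using $\mathrm{sech}^2\le 1$ together with the triangle inequality to get \eqref{eq:nonincr_cond_a} from \eqref{eq:nn-conditions-2}. Your explicit remark on the strict-versus-nonstrict mismatch between \eqref{eq:nonincr_cond_a} and \eqref{eq:nn-conditions-2} is a point the paper's proof glosses over; otherwise the arguments coincide.
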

\begin{proof}
The derivatives of $\gamma_n$ and $\xi_n$ with respect to $v_n$ are
\begin{align*}
    \frac{\partial \gamma_n}{\partial v_n} \!=\! \sum_{h=1}^{H} w_{p,n}^{(h)} a_n^{(h)} \text{sech}^2\bigg( a_n^{(h)} v_n \!+\!   b_n^{(h)} p_{L,n} \!+\! c_{n}^{(h)} q_{L, n} \!+\!        d_n^{(h)} \bigg)\\
    \frac{\partial \xi_n}{\partial v_n} \!=\! \sum_{h=1}^{H} w_{q,n}^{(h)} a_n^{(h)} \text{sech}^2\bigg( a_n^{(h)} v_n \!+\!   b_n^{(h)} p_{L,n} \!+\! c_{n}^{(h)} q_{L, n} \!+\!        d_n^{(h)} \bigg)
\end{align*}
and, under~\eqref{eq:nn-conditions-1}, are non-positive,. Thus~\eqref{eq:nonincr_cond} holds.

Condition~\eqref{eq:nn-conditions-2} then implies~\eqref{eq:nonincr_cond_a} using the fact that $\text{sech}^2(\cdot) \leq 1$ and applying the triangle inequality followed by the Cauchy–Schwarz inequality.
%
% \marginJC{This is expressed oddly: condition (13b) does not need to follow from anything, it's an assumption we have stated in the proposition. I guess what we want to say is that condition (13b) implies (10b)?}
%
\end{proof}
Proposition~\ref{prop:NNparameters} characterizes the feasible sets $\tilde{\boldsymbol{\Theta}}$ and $\tilde{\boldsymbol{\Phi}}$. In practical implementations, one can fix the $a_n^{(h)}$'s and find the $w_{q,n}^{(h)}$'s to satisfy~\eqref{eq:nn-conditions-2} using a simple bisection approach or by solving the projected optimization problem using convex solvers~\cite{diamond2016cvxpy}. This can be interpreted as decreasing the absolute value of the $w_{q,n}^{(h)}$'s whenever the condition is violated. The restriction on the sign of these parameters as a result of~\eqref{eq:nn-conditions-1} further simplifies the projection. The details of these formulations are omitted here due to space limitations.

\section{Case Study}
We conduct case studies on a single-phase equivalent of the IEEE 37-bus feeder, see Fig.~\ref{fig:ieee37}. 
% Fig.~\ref{fig:solarload} shows the total demand and solar generation across the feeder. 
Five DERs are deployed at buses $\Cc = \{10,15,16,20,25\}$, with generation capability $\bar p_n  = 0.4 $ MW, $\underline p_{n} = 0$ MW, and $\bar q_{n} = -\underline q_{n} = 0.4 $ MVAR for all $n\in \Cc$.
For our experiments, we use the data set synthesized in~\cite{yuan2023learning}, which consists of $B=1440$ minute-based load and uncontrollable solar generation scenarios. 

\begin{figure}[tb]
\centering	
\includegraphics[width=0.5\columnwidth]{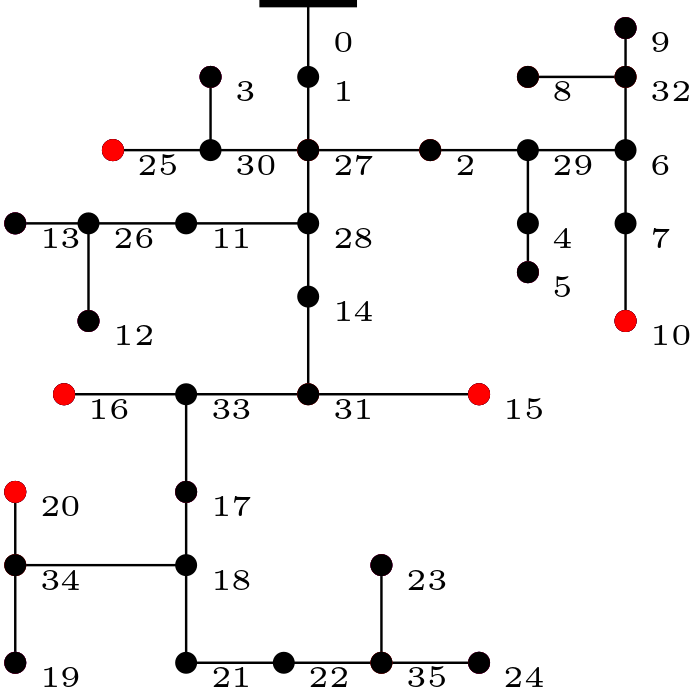}
\caption{IEEE 37-bus feeder. Red nodes represent buses hosting DERs, black nodes represent loads.}
\label{fig:ieee37}
\end{figure}

% \begin{figure}[tb]
%     \centering
%     \includegraphics[width=.4\textwidth]{}\\
%     \caption{Minute-based data for the total (feeder-wise) solar power generation and active power demand.}
%     \label{fig:solarload}
% \end{figure}

We implement the neural network approach according to Proposition~\ref{prop:NNparameters} using Pytorch and conduct the training process in Google Colab with a single TPU with 32 GB memory. The number of episodes and the number of neurons $H$ are 5000 and 50, respectively, and the neural networks are trained using the Adam optimizer~\cite{kingma2014adam} with learning rate $\delta = 0.01$.
For convenience, we set all $a_n^{(h)}$ in~\eqref{eq:nn-conditions-1} to be 1 for all $h \in \{1,...,H\}$ and $n \in \Cc$. We solve a convex program to project the neural network parameters to satisfy~\eqref{eq:nn-conditions-2} every 10 epochs using CVXPY~\cite{diamond2016cvxpy}.

% \subsection{Performance of NIF Design with Equity Penalty}

To evaluate its effectiveness, our NIF design is compared with the linear Volt/Watt~\cite{braslavsky2017voltage} and Volt/Var~\cite{turitsyn2011options} controller:
\begin{align*}
&p_n(t+1) = \rho_n(v_n(t))  \\
&\rho_n(v_n) := \begin{cases}
p_{\max,n} & v_n(t) \leq v_{\min,n}^{\rm th}, \\
p_{\min,n} & v_n(t) \geq v_{\max,n}, \\
-b_n(v_n(t) \!-\! v_{\min,n}^{\rm th}) \!+\! p_{\max,n} & \text{otherwise},
\end{cases}
\end{align*}
\vspace{-2ex}
\begin{align*}
&q_n(t+1) = \varrho_n(v_n(t))  \\
&\varrho_n(v_n) := \begin{cases}
q_{\max,n} & v_n(t) \leq v_{\min,n}, \\
q_{\min,n} & v_n(t) \geq v_{\max,n}, \\
-c_n(v_n(t) \!-\! v_{\min,n}) \!+\! q_{\max,n} & \text{otherwise},
\end{cases}
\end{align*}
where $b_n = \frac{p_{\max,n}-p_{\min,n}}{v_{\max,n}-v_{\min,n}^{\rm th}}$, $c_n = \frac{q_{\max,n}-q_{\min,n}}{v_{\max,n}-v_{\min,n}}$, and $v_{\min,n}^{\rm th} = 1.03$, $v_{\min,n} = 0.95$ and $v_{\max,n} = 1.05$ for all $n \in \Cc$.

We consider the vector $\zbf_\Cc$ in~\eqref{eq:equity-penalty} to be the diagonal of $\Rbf$, which measures the electric distance between the buses hosting the controllable DERs and the substation. This consideration is based on observations in the literature that the amount of curtailed renewable energy tends to increase with the electrical distance to the substation~\cite{SX-YX-LC-2021, SP-MM-RS-AR:23}.

With $\lambda = 0.0154$ in~\eqref{eq:controller-design-1}, Fig.~\ref{fig:curves} shows the learned Volt/Watt and Volt/Var curves for node 25 with the 795-th minute-based load and generation profile under the NIF design. It can be seen that the equity-promoting penalization~\eqref{eq:equity-penalty} promotes a slightly different shape of the learned curves.

\begin{figure}[tb]
    \centering
    \subfigure[Volt/Watt Function]{
    \includegraphics[width=.24\textwidth]{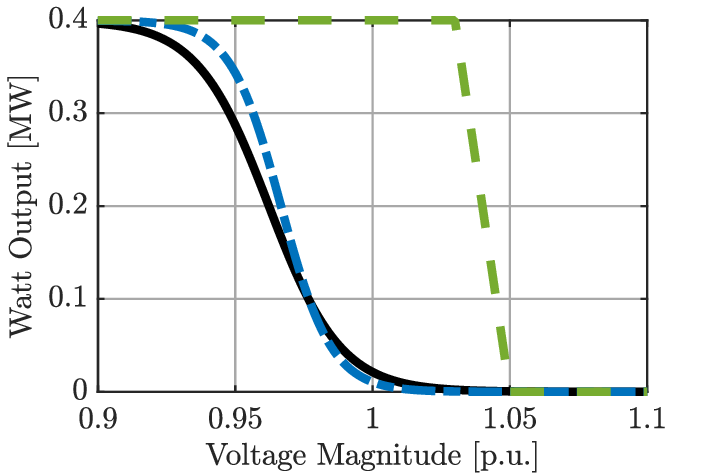}
    \label{fig:p-curve}
    }\hspace{-3ex}
    \subfigure[Volt/Var Function]{
    \includegraphics[width=.24\textwidth]{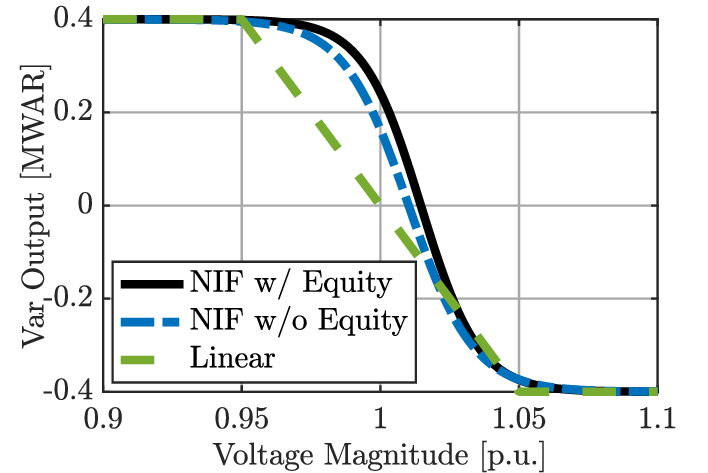}
    \label{fig:q-curve}
    }
  \caption{Learned (a) Volt/Watt and (b) Volt/Var curves for node 25 using 795-th minute-based load and generation profiles with and without equity design.}
  \label{fig:curves}
\end{figure}

Next, we verify the stability properties of the NIF design. Consider the 1095-th minute load-generation profile and 100 iterations of~\eqref{eq:power_update}, the active and reactive power setpoint trajectories converge to their final values if $\epsilon = 0.1$, which satisfies~\eqref{eq:nonincr_cond_b}, cf. Fig.~\ref{fig:stability-0.1}, while the case $\epsilon = 1$ fails, cf. Fig.~\ref{fig:stability-1}. This is consistent with our conclusion in Theorem~\ref{thm:NIF_stable}.

% Fig.~\ref{fig:stability} reports the evolution of the DERs' active and reactive power setpoints using load-generation profiles of the 1095-th minute and considers 100 iterations of~\eqref{eq:power_update}. For $\epsilon = 0.1$, the active and reactive power setpoint trajectories converge to their final values, cf.~Fig.~\ref{fig:stability-0.1}, whereas the case $\epsilon = 1$ fails, cf.~Fig.~\ref{fig:stability-1}.

We then perform simulations to evaluate the control performance in case the load-generation profiles change with time, using the learned equilibrium functions under the NIF design with equity penalty. Specifically, we obtain load-generation profiles by randomly perturbing (5\%) the consumption data used to learn these curves. This can be interpreted as having the data from the data set prescribing a day-ahead forecast, whereas their random perturbation acts as the true realization of the load-generation scenarios. We test the control performance during 12:00 to 16:00. With 100 iterations of~\eqref{eq:power_update} for each minute-based profile, Fig.~\ref{fig:volt_dev} and Fig.~\ref{fig:opt_gap} respectively show the maximum voltage deviation and optimality gap for both NIF and the linear control cases along the evolution. It can be observed that the NIF control significantly enhances the control performance while ensures stability.

\begin{figure}[t]
    \centering
     \subfigure[$\epsilon = 0.1$]{
    \includegraphics[width=.24\textwidth]{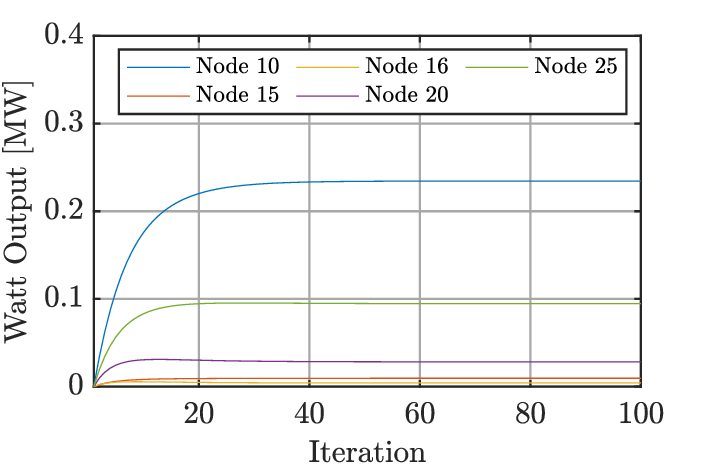}
    \hspace{-2.5ex}
    \includegraphics[width=.24\textwidth]{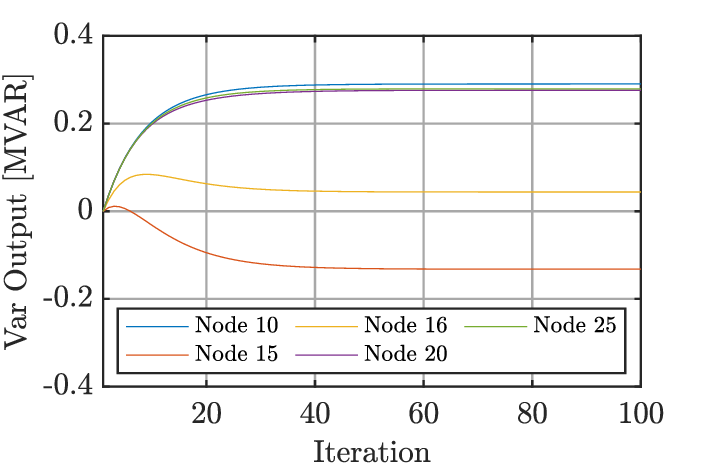}
    \label{fig:stability-0.1}
    }\\
    \vspace{-2ex}
    \subfigure[$\epsilon = 1$]{
    \includegraphics[width=.24\textwidth]{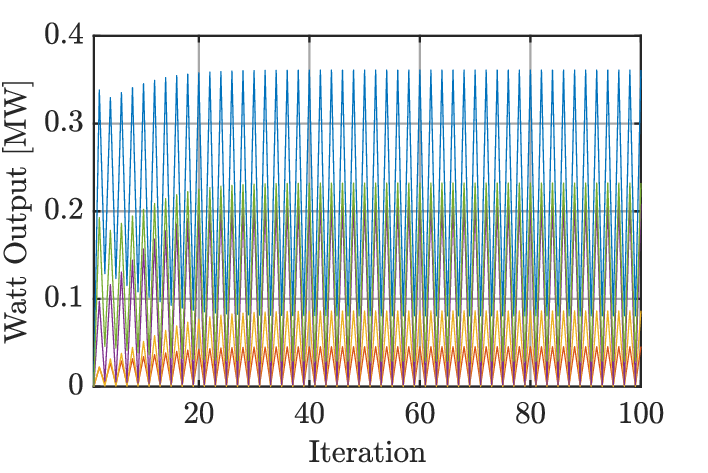}
    \hspace{-2.5ex}
    \includegraphics[width=.24\textwidth]{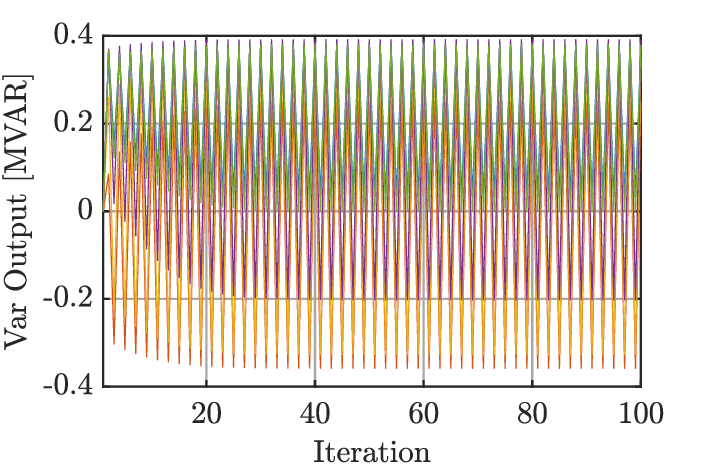}
    \label{fig:stability-1}
    }
  \caption{Evolution of active and reactive power setpoints under the proposed power update rule~\eqref{eq:power_update} with (a) $\epsilon = 0.1$ and (b) $\epsilon = 1$, where we use the power data profiles of the 1095-th minute and consider 100 iterations.
  }
  \label{fig:stability}
\end{figure}

\begin{figure}[tb]
    \centering
    \subfigure[Maximum voltage deviation]{
    \includegraphics[width=.24\textwidth]{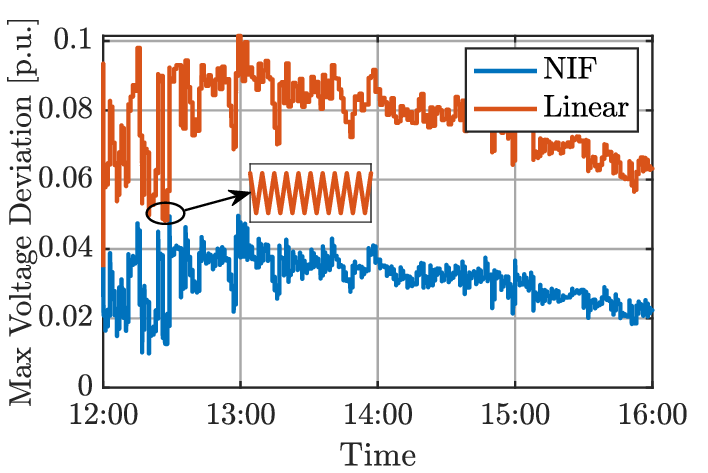}
    \label{fig:volt_dev}
    }\hspace{-3ex}
    \subfigure[Optimality gap]{
    \includegraphics[width=.24\textwidth]{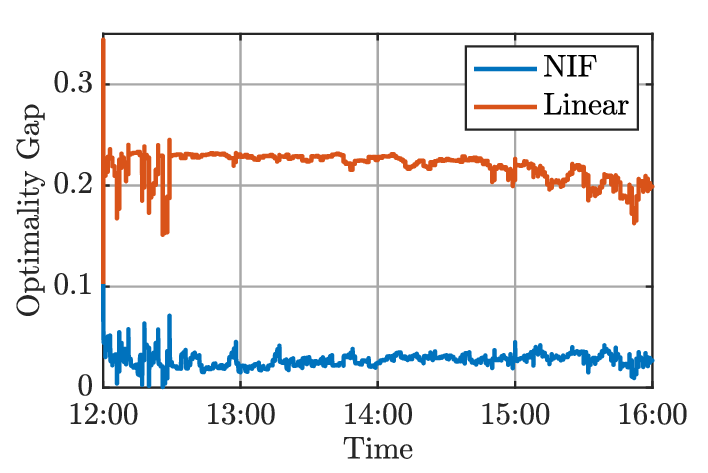}
    \label{fig:opt_gap}
    }
    \caption{Maximum voltage deviation and optimality gap with respect to the OPF solutions along the evolution. The linear control method induces instability between 12:00 and 13:00.}
\end{figure}

Fig.~\ref{fig:curtailment} compares the curtailment evolution with and without equity penalty. We note that node 25 is farther than  node 10 in the electrical distance sense. It can be observed in Fig.~\ref{fig:curtailment_no_regu} and Fig.~\ref{fig:curtailment_regu} that the equity penalty promotes node 10 to curtail more generation so that nodes 10 and 25 contribute more equally to address over-voltage (resulted by relatively high uncontrollable solar generations during this time period). Finally, Fig.~\ref{fig:evolution_penalty} shows the evolution of equity cost value with and and without the equity-promoting penalty, illustrating the overall effectiveness of the equity-promoting design.

\begin{figure*}[htb]
    \centering
    \subfigure[Curtailment without equity design]{
    \includegraphics[width=.26\textwidth]{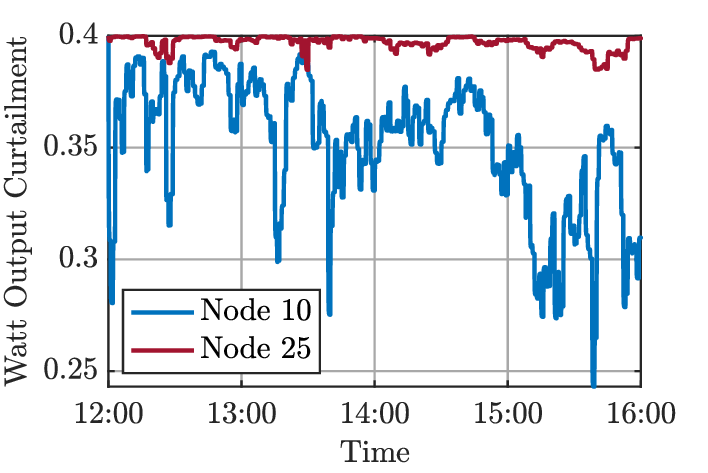}
    \label{fig:curtailment_no_regu}
    }
    \subfigure[Curtailment with equity design]{
    \includegraphics[width=.26\textwidth]{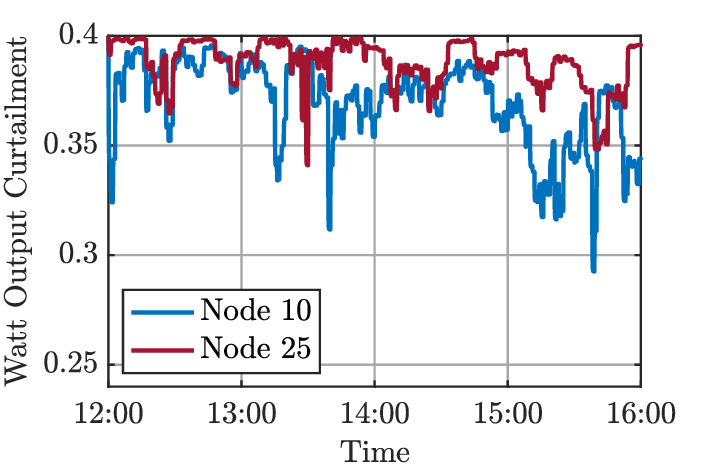}
    \label{fig:curtailment_regu}
    }
    \subfigure[Evolution of equity cost]{
    \includegraphics[width=.26\textwidth]{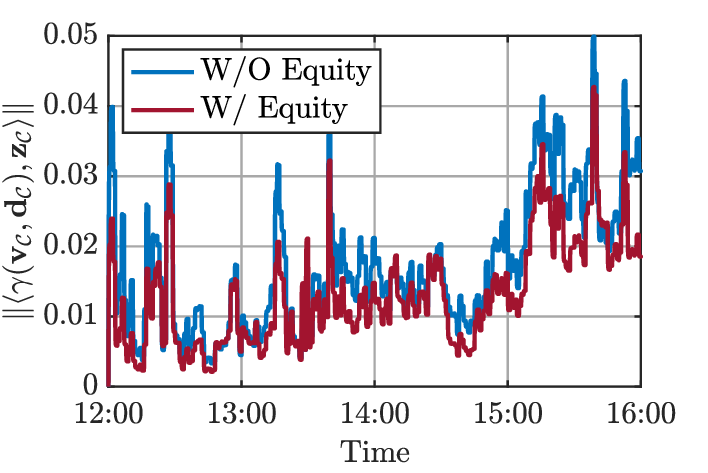}
    \label{fig:evolution_penalty}
    }
  \caption{Comparison of the evolution of curtailment between node 10 and node 25 (a) without and (b) with equity-promoting design and (c) the evolution of the equity cost ($\lambda = 0.0154$).
  }
  \label{fig:curtailment}
\end{figure*} 

% \begin{figure}[htb]
%     \centering
    
%   \caption{Comparison of the evolution of equity cost value with and without equity-promoting design.
%   }
  
% \end{figure} 

\section{Conclusions}

We have presented a fairness-aware unsupervised learning framework for designing local controllers for DERs in DNs.
The framework aims to learn local surrogates that map local information  to active and reactive power, effectively approximating the solutions derived from OPF models.
We have derived explicit conditions on these surrogates such that the power setpoints converge in a globally asymptotic sense when applied within an incremental control algorithm.
We have shown in power flow simulations that the proposed framework guarantees the voltage stability and significantly enhances the control performance compared to prevalent local control approaches. The simulations have also underscored the pivotal role played by equity-penalty design in fostering fairness in renewable energy curtailment strategies.
Future work will attempt to drop the constraint that equilibrium functions are non-increasing w.r.t. voltage magnitude, which has potential advantages in DNs where DERs have relatively small control capability~\cite{yuan2023constraints}, characterize the performance benefits of the approach proposed here with respect to supervised learning in terms of robustness under changes in network topology, and evaluate the performance of the proposed method with different fairness metrics, robustness against model inaccuracies and uncertainties, as well as its scalability properties in the presence of more DERs.

\appendix[Proof of Theorem~\ref{thm:NIF_stable}]
\label{app:proofs}

From~\eqref{eq:dyn_sys}, the voltage magnitudes evolution is
\begin{equation*}
\vbf_\Cc(t+1) = (1 - \epsilon) \vbf_\Cc(t) + \epsilon 
\begin{bmatrix}
\Rbf \ \Xbf
\end{bmatrix}
\begin{bmatrix}
\gammab(\vbf_\Cc(t),\dbf_\Cc) \\ \xib(\vbf_\Cc(t),\dbf_\Cc)
\end{bmatrix}
+ \epsilon  \hat{\bf v}_\Cc.
\end{equation*}
Define the diagonal matrices $\Pbf(t)$ and $\Qbf(t)$ with elements
\begin{align*}
P_n(t) &:= 
\begin{cases}
\frac{\gamma_{n}(v_n(t),d_n) - \gamma_{n}(v_n^\prime(t),d_n)}{v_n(t) - v_n^\prime(t)} & v_n \neq v_n^\prime, \\
\hfil 0 & v_n = v_n^\prime,
\end{cases}\\
Q_n(t) &:= 
\begin{cases}
\frac{\xi_{n}(v_n(t),d_n) - \xi_{n}(v_n^\prime(t),d_n)}{v_n(t) - v_n^\prime(t)} & v_n \neq v_n', \\
\hfil 0 & v_n = v_n^\prime.
\end{cases}
\end{align*}
For any $\vbf_\Cc(t),\vbf_\Cc^\prime(t) \in \mathbb{R}^{C}$, it holds that
\begin{align*}
    \vbf_\Cc(t &+1) -\vbf_\Cc^\prime(t+1) = (1-\epsilon)(\vbf_\Cc(t) - \vbf_\Cc^\prime(t)) +\notag\\
    & \; \epsilon 
\begin{bmatrix}
    \Rbf \ \Xbf
\end{bmatrix}
\left(
\begin{bmatrix}
\gammab(\vbf_\Cc(t),\dbf_\Cc) \\ \xib(\vbf_\Cc(t),\dbf_\Cc)
\end{bmatrix}
-
\begin{bmatrix}
\gammab(\vbf_\Cc^\prime(t),\dbf_\Cc) \\ \xib(\vbf^\prime_\Cc(t),\dbf_\Cc)
\end{bmatrix}
\right) \\
    & = \big[ (1 - \epsilon)\Ibf + \epsilon \Rbf \Pbf(t) + \epsilon \Xbf \Qbf(t)  \big](\vbf_\Cc(t) - \vbf^\prime_\Cc(t)) \\
    & := \Sbf(t)(\vbf_\Cc(t) - \vbf_\Cc^\prime(t)).
\end{align*}
Consequently, 
\begin{align*}
\lim_{t \rightarrow \infty} \vbf_\Cc(t) - \vbf_\Cc^\prime(t) = \lim_{t \rightarrow \infty} \left[ \prod_{k=0}^{t-1} \Sbf(k) \right] (\vbf_\Cc(0) - \vbf_\Cc^\prime(0)).
\end{align*}
To prove the system stability, we show that the linear map defined by each $\Sbf(k)$ is a contraction, from which we conclude
\begin{equation}
\lim_{t \rightarrow \infty} \left[ \prod_{k=0}^{t-1} \Sbf(k) \right] = 0    
\label{eq:limt}
\end{equation}
and hence the uniqueness of equilibrium and the global asymptotic stability. 
%
%The uniqueness of the equilibrium and the global asymptotic stability follows from Banach's fixed-point theorem once we prove that~\eqref{eq:limt} is true.

% ancla
Under the NIF design, leveraging condition~\eqref{eq:nonincr_cond} and the fact that the entries of $\Rbf$ and $\Xbf$ are positive~\cite{Cavraro2019Inverter}, we have
\begin{align*}
&\Sbf(k) = \big[ (1 - \epsilon)\Ibf - \epsilon \Rbf |\Pbf(k)| - \epsilon \Xbf |\Qbf(k)|  \big] \\
& = \big[ (1 - \epsilon)\Ibf - \epsilon  \Rbf ( |\Pbf(k)| + \alpha^* |\Qbf(k)|) - \epsilon \hat \Xbf |\Qbf(k)|  \big] \\
 & =\Rbf^{\frac 1 2} \Big[ \underbrace{(1 - \epsilon)\Ibf - \epsilon \Rbf^{\frac 1 2} ( |\Pbf(k)| + \alpha^* |\Qbf(k)|) \Rbf^{\frac 1 2}}_{\Zbf_1(k)} \\
 &\qquad \qquad \qquad \qquad \qquad \qquad - \underbrace{\epsilon \Rbf^{-\frac 1 2} \hat \Xbf |\Qbf(k)| \Rbf^{\frac 1 2}}_{\Zbf_2(k)} \Big] \Rbf^{-\frac 1 2}.
\end{align*}
Since $\Sbf(k)$ is similar to $\Zbf_1(k)-\Zbf_2(k)$, it suffices to show that the latter is a contraction. Noting
$    \left\| \Zbf_1(k) - \Zbf_2(k) \right\| \leq \left\| \Zbf_1(k) \right\| + \left\| \Zbf_2(k) \right\|$,
%
%\begin{align*}
%    \big\| \Sbf(k) \big\| &\leq \|\Rbf^{\frac 1 2} \| \|\Rbf^{-\frac 1 2} \| \left\| \Zbf_1(k) - \Zbf_2(k) \right\|
%    \\
%    &\leq \kappa(\Rbf^{\frac 1 2}) (\left\| \Zbf_1(k) \right\| + \left\| \Zbf_2(k) \right\|).
%\end{align*}
%
we show that $\left\| \Zbf_1(k) \right\| + \left\| \Zbf_2(k) \right\| < 1$.
% implying that $\lim_{t \rightarrow \infty}  \prod_{k=0}^{t-1} (\left\| \Zbf_1(k) \right\| + \left\| \Zbf_2(k) \right\|) = 0$. 
%
%\marginJC{But this is not enough for $S(k)$ b/c of the condition number! Even if  $\lim_{t \rightarrow \infty}  \prod_{k=0}^{t-1} (\left\| \Zbf_1(k) \right\| + \left\| \Zbf_2(k) \right\|) = 0$, it could still be that $\lim_{t \rightarrow \infty}  \prod_{k=0}^{t-1} (\S(k) ) = \infty$, b/c we get terms of the form $\kappa(\Rbf^{\frac 1 2})^t $. I think there is an easy fix: check out my email}
%
Notice
$$
\|\Zbf_2(k)\| \leq \epsilon \|\Rbf^{-\frac 1 2}\| \|\hat \Xbf\| \|\Qbf(k)\| \|\Rbf^{\frac 1 2}\| \leq \epsilon \kappa(\Rbf^{\frac 1 2}) \|\hat \Xbf\| L_q.
$$
Therefore, the condition
\begin{align}\label{eq:con:z1}
    \|\Zbf_1(k)\| &< 1 - \epsilon \kappa(\Rbf^{\frac 1 2}) \|\hat \Xbf\| L_q
\end{align}
is enough to ensure that $\left\| \Zbf_1(k) \right\| + \left\| \Zbf_2(k) \right\| < 1$.
Since $\Zbf_1(k)$ is symmetric, it follows that 
\begin{align*}
\|\Zbf_1(k)\| &= \max \big\{ |1  \!-\! \epsilon \!-\! \epsilon \lambda_{\max}(\Rbf^{\frac 1 2}(|\Pbf(k)| \!+\! \alpha^* |\Qbf(k)|)\Rbf^{\frac 1 2})|, \notag \\
 &|1 \!-\! \epsilon \!-\! \epsilon\lambda_{\min}(\Rbf^{\frac 1 2}(|\Pbf(k)| \!+\! \alpha^* |\Qbf(k)|)\Rbf^{\frac 1 2})| \big \}.
% \label{eq:normZ1}.
\end{align*}
Since $\Rbf^{\frac 1 2}(|\Pbf(k)| \!+\! \alpha^* |\Qbf(k)|)\Rbf^{\frac 1 2}$ is positive semidefinite,~\eqref{eq:con:z1} holds true if and only if
\begin{subequations}\label{eq:condition-epsilon}
\begin{align}
   & 1 - \epsilon - \epsilon \lambda_{\min}(\Rbf^{\frac 1 2}(|\Pbf(k)| \!+\! \alpha^* |\Qbf(k)|)\Rbf^{\frac 1 2}) \notag \\
   &\qquad \qquad \qquad \qquad \qquad \qquad < 1 - \epsilon \kappa(\Rbf^{\frac 1 2}) \|\hat\Xbf\| L_q, \label{eq:condition-epsilon-1}\\
   & 1 - \epsilon - \epsilon \lambda_{\max}(\Rbf^{\frac 1 2}(|\Pbf(k)| \!+\! \alpha^* |\Qbf(k)|)\Rbf^{\frac 1 2}) \notag \\
   &\qquad \qquad \qquad \qquad \qquad \qquad > \epsilon \kappa(\Rbf^{\frac 1 2}) \|\hat\Xbf\| L_q - 1. \label{eq:condition-epsilon-2}
\end{align}    
\end{subequations}
To prove inequality~\eqref{eq:condition-epsilon-1}, heed that condition~\eqref{eq:nonincr_cond_a} implies that
$$1 - \epsilon \kappa(\Rbf^{\frac 1 2}) \|\hat\Xbf\| L_q > 1-\epsilon.$$
Then,~\eqref{eq:condition-epsilon-1} holds as $\lambda_{\min}(\Rbf^{\frac 1 2}(|\Pbf(k)| + \alpha^* |\Qbf(k)|)\Rbf^{\frac 1 2}) \geq 0$. 
Inequality~\eqref{eq:condition-epsilon-2} follows from~\eqref{eq:nonincr_cond_b} and from the fact that $\epsilon \in [0,1]$ and $\Rbf^{\frac 1 2}(|\Pbf(k)| \!+\! \alpha^* |\Qbf(k)|)\Rbf^{\frac 1 2}$ is similar to $\Rbf(|\Pbf(k)| \!+\! \alpha^* |\Qbf(k)|)$. This ends the proof.

\bibliographystyle{ieeetr}

% that's all folks
\end{document}